\newcommand{\axref}[1]{(\hyperref[ax:#1]{#1})}
\newcommand{\newref}[4][]{
\ifthenelse{\equal{#1}{}}{\newtheorem{h#2}[hthm]{#4}}{\newtheorem{h#2}{#4}[#1]}
\expandafter\newcommand\csname r#2\endcsname[1]{#3~\ref{#2:##1}}
\expandafter\newcommand\csname R#2\endcsname[1]{#4~\ref{#2:##1}}
\newenvironmentx{#2}[2][1=,2=]{
\ifthenelse{\equal{##2}{}}{\begin{h#2}}{\begin{h#2}[##2]}
\ifthenelse{\equal{##1}{}}{}{\label{#2:##1}}
}{\end{h#2}}
}
\theoremstyle{definition}
\theoremstyle{remark}
\newcommand{\fs}[1]{\mathrm{#1}}
\newcommand{\lcon}{\fs{left}}
\newcommand{\rcon}{\fs{right}}
\newcommand{\Path}{\fs{Path}}
\newcommand{\pcon}{\fs{path}}
\newcommand{\I}{\fs{I}}
\newcommand{\coe}{\fs{coe}}
\newcommand{\iso}{\fs{iso}}
\newcommand{\id}{\fs{id}}
\newcommand{\U}{\mathcal{U}}
\newcommand{\hU}{\widehat{\mathcal{U}}}
\newcommand{\Eq}{\fs{Eq}}
\newcommand{\qEq}{\fs{qEq}}
\newcommand{\ev}{\fs{ev}}
\numberwithin{figure}{section}
\newcommand{\pb}[1][dr]{\save*!/#1-1.2pc/#1:(-1,1)@^{|-}\restore}
\begin{document}

\title{Models of Homotopy Type Theory with an Interval Type}

\author{Valery Isaev}

\begin{abstract}
In this short note, we construct a class of models of an extension of homotopy type theory, which we call homotopy type theory with an interval type.
\end{abstract}

\maketitle

\section{Introduction}

Homotopy type theory with an interval type (HoTT-I) is a simple extension of the ordinary homotopy type theory, which is implemented in Arend proof assistant.
Instead of identity types, HoTT-I has the interval type and path types defined as certain functions from it.
The univalence axiom is defined in HoTT-I in such a way that the composite $\Sigma_{(f : A \to B)} \mathrm{isEquiv}(f) \to A =_\mathcal{U} B \to (A \to B)$ is definitionally equal to the first projection.
HoTT-I can also be extended with higher inductive types which satisfy computational $\beta$-rules even for higher constructors, but we will not discuss them in this note.
The theory is similar to cubical type theory \cite{cubical-tt}, but it does not satisfy the canonicity property.

In this note, we will show that (a basic version) HoTT-I can be interpreted in any right proper Cartesian model category in which cofibrations are precisely monomorphisms and which is locally Cartesian closed as a category assuming it has enough univalent universes.
A model category is Cartesian if, for every pair of cofibrations $f : A \to C$ and $g : B \to D$, their pushout-product $f \square g : A \times D \amalg_{A \times C} B \times C \to B \times D$ is a cofibration and is a trivial cofibration whenever one of the maps $f$ and $g$ is.
We let $\mathcal{M}$ be a fixed model category satisfying these properties, which will be used throughout this note.

We will use the universe construction \cite{kap-lum-voe} to solve the coherence issues.
Thus, we will assume that there is a fixed fibration $\pi : \hU \to \U$ in $\mathcal{M}$, which satisfies the univalence axiom and classifies a class of fibrations closed under all necessary constructions.
A dependent type $\Gamma \vdash A$ will be interpreted as a map $v_A : \Gamma \to \U$.
Terms of type $A$ are interpreted as sections of $\pi$ over $v_A$.
If $v_A$ factors as $\Gamma \xrightarrow{f} V_A \xrightarrow{g} \U$, then terms of type $A$ can be equivalently described as sections of $g^*(\pi)$ over $f$.
We will use both of these definitions.

\section{Interval type}
\label{sec:interval}

The interval type is just the unit type with two constructors: $\lcon$ and $\rcon$.
The eliminator for it is the same as for the unit type and will be denoted by $\coe$:

\medskip
\begin{center}
\AxiomC{$\Gamma \vdash$}
\UnaryInfC{$\Gamma \vdash \I$}
\DisplayProof
\quad
\AxiomC{$\Gamma \vdash$}
\UnaryInfC{$\Gamma \vdash \lcon : \I$}
\DisplayProof
\quad
\AxiomC{$\Gamma \vdash$}
\UnaryInfC{$\Gamma \vdash \rcon : \I$}
\DisplayProof
\end{center}

\medskip
\begin{center}
\AxiomC{$\Gamma, x : \I \vdash A$}
\AxiomC{$\Gamma \vdash a : A[x := \lcon]$}
\AxiomC{$\Gamma \vdash i : \I$}
\TrinaryInfC{$\Gamma \vdash \coe(x. A, a, i) : A[x := i]$}
\DisplayProof
\end{center}

\[ \coe(x. A, a, \lcon) \equiv a \]

The interval type can be interpreted as the terminal object, but this will give us a model satisfying the K axiom.
To get a homotopic model, we will interpret $\I$ as an interval object.
We can take $\I$ to be any contractible fibrant object with a cofibration from $1 \amalg 1$, but we will actually assume that we have a cofibration from $4 = 1 \amalg 1 \amalg 1 \amalg 1$; this will be useful later.
Thus, let $4 \to \I \to 1$ be a factorization of the map $4 \to 1$ into a cofibration followed by a trivial fibration.
Constructors $\lcon$ and $\rcon$ are interpreted as maps $1 \to \I$ corresponding to the first and second coprojections of the map $4 \to \I$.
If $\Gamma$ is any object of $\mathcal{M}$, we will also denote by $\lcon$ and $\rcon$ the maps $\langle \id, \lcon \circ !_\Gamma \rangle : \Gamma \to \Gamma \times \I$ and $\langle \id, \rcon \circ !_\Gamma \rangle : \Gamma \to \Gamma \times \I$, respectively.

Let us describe the interpretation of $\I$ and $\coe$.
We need to assume that $\I$ is classified by a map $\chi_\I : 1 \to \U$.
Then we define $v_\I : \Gamma \to \U$ as $\chi_\I \circ !_\Gamma$.
To describe $\coe(A,a,i)$, consider the following pullback:
\[ \xymatrix{ T \ar[r]^e \ar[d]_d \pb                   & \hU \ar[d]^\pi \\
              \mathcal{U}^\I \ar[r]_{\ev \circ \lcon}   & \U
            } \]
Then $v_A : \Gamma \times \I \to \U$ and $a : \Gamma \to \hU$ determine a map $c : \Gamma \to T$.
The interpretation of $\coe(A,a,i)$ is $\Gamma \xrightarrow{\langle c, i \rangle} T \times \I \xrightarrow{s} \hU$, where $s$ is a lift in the following diagram:
\[ \xymatrix{ T           \ar[rr]^e \ar[d]_\lcon                  &                             & \hU \ar[d]^\pi \\
              T \times \I \ar[r]_-{d \times \id} \ar@{-->}[urr]^s & \U^I \times I \ar[r]_-\ev   & \U
            }\]
If $i = \lcon$, then $\langle c, i \rangle$ factors through $e : T \to \hU$, which implies that $\coe$ satisfies the required computational rule.

We will add more computational rules for $\coe$ later.
Thus, we will need to modify the interpretation of $\coe$ to support them.
To do this, we will use the following general construction.
Let $C$ be an object of $\mathcal{M}$, let $f : C \to T$ be a cofibration, and let $g : C \times \I \to \hU$ be a map such that the following diagram commutes:
\[ \xymatrix{ C \ar[rr]^f \ar[d]_\lcon                    &                             & T \ar[d]^e \\
              C \times \I \ar[rr]^g \ar[d]_{f \times \id} &                             & \hU \ar[d]^\pi \\
              T \times \I \ar[r]_-{d \times \id}          & \U^\I \times \I \ar[r]_-\ev & \U
            } \]
Now, consider the following diagram:
\[ \xymatrix{ C \times \I \amalg_C T \ar[rr]^-{[g,e]} \ar[d]_{f \square \lcon} &                                & \hU \ar[d]^\pi \\
              T \times \I \ar[r]_-{d \times \id} \ar@{-->}[urr]^{s'}           & \U^\I \times \I \ar[r]_-\ev    & \U
            } \]
Conditions on $f$ and $g$ guarantee that the map $[g,e]$ is well-defined and that the diagram above consisting of solid arrows commutes.
Since $f$ is a cofibration and $\lcon$ is a trivial cofibration, $f \square \lcon$ is also a trivial cofibration.
Thus, we have a lift $s'$ in this diagram.
We can interpret $\coe(A,a,i)$ as $\Gamma \xrightarrow{\langle c, i \rangle} T \times \I \xrightarrow{s'} \hU$.
If $c : \Gamma \to T$ factors as $\Gamma \xrightarrow{c'} C \xrightarrow{f} T$, then $\coe(A,a,i)$ equals to $\Gamma \xrightarrow{\langle c', i \rangle} C \times \I \xrightarrow{g} \hU$, which will give us required additional computational rules.

Let $\{ (f_j : C_j \to T, g_j : C_j \times \I \to T) \}_{j \in \{1,2\}}$ be two pairs of maps satisfying the conditions given above.
Suppose that $\mathcal{M}$ is a topos.
Then we can define the union $f : C \to T$ of subobjects $f_1$ and $f_2$ as $C_1 \amalg_{C_0} C_2 \to T$, where $C_0$ is the intersection of $f_1$ and $f_2$.
Since $- \times \I$ commutes with colimits, $(C_1 \amalg_{C_0} C_2) \times \I$ is the pushout $C_1 \times \I \amalg_{C_0 \times \I} C_2 \times \I$.
Thus, we can define the map $g : (C_1 \amalg_{C_0} C_2) \times \I \to T$ determined by $g_1$ and $g_2$ if the following square commutes:
\[ \xymatrix{ C_0 \times \I \ar[r] \ar[d] & C_2 \times \I \ar[d]^{g_2} \\
              C_1 \times \I \ar[r]_-{g_1} & \hU
            } \]

By the universal property of pushouts, $f$ and $g$ satisfy the required conditions.
Thus, we obtained an interpretation of $\coe$ which satisfies computational rules corresponding to both $(f_1,g_1)$ and $(f_2,g_2)$.
That is, if $\mathcal{M}$ is a topos, we can combine two additional computational rules for $\coe$ as long as $g_1$ and $g_2$ corresponding to these rules satisfy the condition given above.
More generally, if we have a finite set of additional computational rules for $\coe$, then we just need to check that this condition holds pairwise.

\begin{remark}
Informally, the condition on $g_1$ and $g_2$ simply means that the right hand sides of the corresponding computational rules agree on the intersection of the left hand sides.
\end{remark}

\section{Path types}

Identity types are replaced with path types in HoTT-I:

\medskip
\begin{center}
\AxiomC{$\Gamma, x : \I \vdash A$}
\AxiomC{$\Gamma \vdash a : A[x := \lcon]$}
\AxiomC{$\Gamma \vdash a' : A[x := \rcon]$}
\TrinaryInfC{$\Gamma \vdash \Path(x. A, a, a')$}
\DisplayProof
\end{center}

\medskip
\begin{center}
\AxiomC{$\Gamma, x : I \vdash a : A$}
\UnaryInfC{$\Gamma \vdash \pcon(x. a) : \Path(x. A, a[x := \lcon], a[x := \rcon])$}
\DisplayProof
\end{center}

\medskip
\begin{center}
\AxiomC{$\Gamma \vdash p : \Path(x. A, a, a')$}
\AxiomC{$\Gamma \vdash i : \I$}
\BinaryInfC{$\Gamma \vdash p\ @_{a,a'}\ i : A[x := i]$}
\DisplayProof
\end{center}

\begin{align*}
& \pcon(x. t)\ @_{a,a'}\ i \equiv t[x := i] \\
& \pcon(x. p\ @\ x) \equiv p \text{ if } x \notin \fs{FV}(p) \\
& p\ @_{a,a'}\ \lcon \equiv a \\
& p\ @_{a,a'}\ \rcon \equiv a'
\end{align*}

Let us describe the interpretation of $\Path(A,a,a')$.
We define $V_\Path$ as the following pullback:
\[ \xymatrix{ V_\Path \ar[rr] \ar[d] \pb                                        & & \hU \times \hU \ar[d]^{p_A \times p_A} \\
              \U^\I \ar[rr]_-{\langle \ev \circ \lcon, \ev \circ \rcon \rangle} & & \U \times \U
            } \]
Let $E_\Path = \hU^\I$ and $p_\Path = \langle \pi \circ -, \langle \ev \circ \lcon, \ev \circ \rcon \rangle \rangle : E_\Path \to V_\Path$.
We assume that $p_\Path$ is classified by a map $\chi_\Path : V_\Path \to \U$.
The maps $v_A : \Gamma \times \I \to \U$ and $a, a' : \Gamma \to \hU$ determine a map $v_\Path' : \Gamma \to V_\Path$.
We define the interpretation of $\Path(A,a,a')$ as $\chi_\Path \circ v_\Path'$.

If $a : \Gamma \times E_\Path$ is a section of $\pi$ over $v_A$, then we define $\pcon(a) : \Gamma \to \hU^\I$ as the map corresponding to $a$ via the adjunction.
If $p : \Gamma \to E_\Path$ is a section of $p_\Path$ over $v_\Path'$ and $i : \Gamma \to \I$, then we can define the interpretation of $@$ as $\Gamma \xrightarrow{\langle p, i \rangle} \hU^\I \times \I \xrightarrow{\ev} \hU$.
A straightforward computation shows that all computation rules hold for this interpretation.

The identity type $a =_A a'$ can be defined as $\Path(x. A, a, a')$.
Its constructor $\fs{refl}(a)$ is defined as $\pcon(x. a)$.
The J rule also can be defined \cite[Section~3.1]{alg-models}.
The only problem is that J satisfies its computational rule only propositionally.
To fix this problem, we can add another computational rule for $\coe$:
\[ \coe(x. A, a, i) \equiv a \text{ if } x \notin \fs{FV}(A) \]

To show that this rule can be interpreted in our model, we define two maps $f$ and $g$ as described in the previous section.
Let $f$ be the map $\langle \fs{const} \circ \pi, \id \rangle : \hU \to T$, where $\fs{const} : \U \to \U^\I$ is the map corresponding to the projection via the adjunction.
The map $f$ is a cofibration since $e \circ f = \id$ and monomorphisms are cofibrations.
Let $g : \hU \times \I \to \hU$ be the first projection.
It is easy to see that $f$ and $g$ satisfy the required conditions.
Now, if $x \notin A$, then $v_A : \Gamma \times \I \to \U$ factors through $g : \hU \times \I \to \hU$.
It follows that the map $c : \Gamma \to T$ defined in the previous section factors through $f : \hU \to T$, which gives us the required computational rule.

\section{Univalence}

The univalence is defined as follows:

\medskip
\begin{center}
\Axiom$\fCenter \Gamma \vdash A$
\noLine
\UnaryInf$\fCenter \Gamma \vdash B$
\def\extraVskip{1pt}
\Axiom$\fCenter \Gamma, x : A \vdash b : B$
\noLine
\UnaryInf$\fCenter \Gamma, y : B \vdash a : A$
\Axiom$\fCenter \Gamma, x : A \vdash p : a[y := b] = a$
\noLine
\UnaryInf$\fCenter \Gamma, y : B \vdash q : b[x := a] = y$
\def\extraVskip{2pt}
\AxiomC{$\Gamma \vdash i : I$}
\QuaternaryInfC{$\Gamma \vdash \iso(A, B, x.b, y.a, x.p, y.q, i)$}
\DisplayProof
\end{center}

\begin{align*}
& \iso(A, B, x.b, y.a, x.p, y.q, \lcon) \equiv A \\
& \iso(A, B, x.b, y.a, x.p, y.q, \rcon) \equiv B \\
& \coe(i. \iso(A, B, x.b, y.a, x.p, y.q, i), a_0, \rcon) \equiv b[x := a_0] \text { if } i \notin \fs{FV}(A\ B\ b\ a\ p\ q)
\end{align*}

If $q : \U' \to \U$ is a trivial fibration, then $q^*(\pi)$ is a universe that classifies the same class of fibrations as $\pi$ since all objects are cofibrant.
Thus, the constructions in the previous sections apply to $q^*(\pi)$.
We cannot prove that $\iso$ can be interpreted in any universe $\pi : \hU \to \U$, but we will show that, for every $\U$, there is a trivial fibration $q : \U' \to \U$ such that $q^*(\pi)$ interprets $\iso$.

Let $\Eq(\U)$ be the object over $\U \times \U$ of equivalences between these two types (it can be defined as the object of bi-invertible maps).
If $\pi' : \hU' \to \U'$ is a pullback of $\pi$ along some trivial fibration $q : \U' \to \U$, then $\Eq(\U')$ fits in the following pullback square:
\[ \xymatrix{ \Eq(\U') \ar@{->>}[r]^{q_2'} \ar@{->>}[d]_{\langle q_0', q_1' \rangle} \pb    & \Eq(\U) \ar@{->>}[d]^{\langle q_0, q_1 \rangle} \\
              \U' \times \U' \ar@{->>}[r]_-{q \times q}                                     & \U \times \U
            } \]
Let $i : \U \to \Eq(\U)$ be the trivial cofibration corresponding to the trivial equivalence (we can actually take any section of $q_0$).
We note that there is a trivial cofibration $i' : \U' \to \Eq(\U')$ defined as $\langle \langle \id, \id \rangle, i \circ q \rangle$.
This map is a cofibration since it is a section.
It is a weak equivalence by the 2-out-of-3 property since $q_2' \circ i' = i \circ q$ and $q_2'$, $i$, and $q$ are weak equivalences.

To define the interpretation of $\iso$ in $\U'$, we need to find a cofibration $\Eq(\U') \to \U'^\I$ over $\U' \times \U'$.
Since $\U$ is a univalent universe, $q_0 : \Eq(\U) \to \U$ is a trivial fibration.
Thus, we can take $\U' = \Eq(\U)$, $q = q_0$, and $\pi' = q_0^*(\pi)$.
Then $\langle q_0', q_1', q_2' \rangle : \Eq(\U') \to \U' \times \U' \times \U'$ is a monomorphism.
Since there is a cofibration $1 \amalg 1 \amalg 1 \to \I$, we have a fibration $\U'^\I \to \U' \times \U' \times \U'$.
Now, consider the following diagram:
\[ \xymatrix{ \U \ar[rr]^-{d \circ i} \ar[d]_{i' \circ i}                  & & \U'^\I \ar@{->>}[d] \\
              \Eq(\U') \ar[rr]_-{\langle q_0', q_1', q_2' \rangle} \ar@{-->}[urr]   & & \U' \times \U' \times \U'
            } \]
where $d : \U' \to \U'^\I$ is the constant map.
Since $i' \circ i$ is a trivial cofibration, we have a lift in this diagram and since it is a left factor of a monomorphism, it is also a monomorphism.

The problem with this construction is that we cannot interpret it together with the rule for $\coe$ defined in the previous section.
We need the image of the map $\Eq(\U') \to \U'^\I$ to interact well with the image of $d : \U' \to \U'^\I$.
To be more precise, we define $\coe$ on the image of the first map as the application of the function corresponding to the equivalence, but it is defined as the identity function on the image of $d$.
Thus, the intersection of this subobjects should be contained in $i' : \U' \to \Eq(\U')$ as a subobject of $\Eq(\U')$.
In this case, two interpretations of $\coe$ will agree on the intersection.

The problem is that we do not have control over the intersection of $\Eq(\U') \to \U'^\I$ and $d : \U' \to \U'^\I$.
Note that we cannot take $i'$ instead of $i' \circ i$ and $d$ instead of $d \circ i$ in the square above because it will not commute.
To fix this problem, we construct a lift in another commutative square.
First, consider the following diagram:
\[ \xymatrix{ \U \ar[rrr]^-{d \circ i} \ar[d]_{i}                                                   & & & \U'^\I \ar@{->>}[d] \\
              \U' \ar[rrr]_-{\langle \id, \id, i \circ q, i \circ q \rangle} \ar@{-->}[urrr]^{d'}   & & & \U' \times \U' \times \U' \times \U'
            } \]
Since $4 \to \I$ is a cofibration, we have a fibration on the right.
Since $i$ is a trivial cofibration, we have a lift $d'$.

Now, consider the following diagram:
\[ \xymatrix{ \U \ar[rrr]^i \ar[d]_i                                                                        & & & \U' \ar[d]^d \\
              \U' \ar[rrr]^{d'} \ar[d]_{i'}                                                                 & & & \U'^\I \ar@{->>}[d] \\
              \Eq(\U') \ar[rrr]_-{\langle q_0', q_1', q_2', i \circ q \circ q_0' \rangle} \ar@{-->}[urrr]^h & & & \U' \times \U' \times \U' \times \U'
            } \]
We have a lift $h$ which is a monomorphism as before.
We claim that it has the required property:

\begin{lem}[intersection]
If $X$ is the intersection of $h$ and $d$ with inclusions $c_1 : X \to \U'$ and $c_2 : X \to \Eq(\U')$, then $i' \circ c_1 = c_2$.
\end{lem}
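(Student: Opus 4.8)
The plan is to use that $h$ is a monomorphism. By definition of $X$ as the intersection of $h$ and $d$ inside $\U'^\I$ we have $d \circ c_1 = h \circ c_2$ (and this intersection is unproblematic since $d$ is a split monomorphism, a retraction being evaluation at an endpoint). By the construction of $h$ we also have $h \circ i' = d'$. Hence $h \circ (i' \circ c_1) = d' \circ c_1$ while $h \circ c_2 = d \circ c_1$, so since $h$ is monic it suffices to prove $d' \circ c_1 = d \circ c_1$ as maps $X \to \U'^\I$.

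To prove this I would post-compose the equation $d \circ c_1 = h \circ c_2$ with the fibration $\U'^\I \to \U' \times \U' \times \U' \times \U'$ induced by the cofibration $4 \to \I$. Restricting a constant path to the four marked points is again constant, so the left-hand side becomes the fourfold diagonal applied to $c_1$, i.e.\ $\langle c_1, c_1, c_1, c_1 \rangle$. On the right-hand side, the defining square of $h$ says that this fibration composed with $h$ equals $\langle q_0', q_1', q_2', i \circ q \circ q_0' \rangle$. Comparing the first and fourth coordinates gives $c_1 = q_0' \circ c_2$ and $c_1 = i \circ q \circ q_0' \circ c_2 = i \circ q \circ c_1 = i \circ q_0 \circ c_1$ (here $q = q_0$). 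Therefore $c_1$ factors through $i : \U \to \U'$: writing $c_1' = q_0 \circ c_1$, we have $c_1 = i \circ c_1'$.

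Finally, the defining square of $d'$ gives $d' \circ i = d \circ i$. Combining, $d' \circ c_1 = d' \circ i \circ c_1' = d \circ i \circ c_1' = d \circ c_1$, as required; feeding this back through the reduction of the first paragraph yields $i' \circ c_1 = c_2$.

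I expect the only real content to be the observation that the auxiliary fourth coordinate $i \circ q \circ q_0'$ — inserted into the defining square of $h$ precisely for this lemma — forces the intersection to lie in the image of $i$; everything else is a formal diagram chase. The steps that require care are bookkeeping ones: that $d$ is a split monomorphism so that the intersection is well behaved, that $\U'^\I \to \U' \times \U' \times \U' \times \U'$ is genuinely a fibration (this uses that $4 \to \I$ is a cofibration, part of the standing assumption on $\I$), and keeping $q_0 : \U' = \Eq(\U) \to \U$ (the trivial fibration along which we pulled back) distinct from $q_0' : \Eq(\U') \to \U'$ throughout the computation.
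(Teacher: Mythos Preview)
Your argument is correct and follows essentially the same route as the paper: both extract from $d \circ c_1 = h \circ c_2$ the four coordinate equations (via the fibration $\U'^\I \to (\U')^4$), use the fourth one to show that $c_1$ factors through $i : \U \to \U'$, and then conclude by a monicity argument. The only organizational difference is that the paper packages the computation as the statement that the outer rectangle is a pullback and appeals to monicity of $\langle q_0', q_1', q_2' \rangle$ rather than of $h$, whereas you reduce directly to $d' \circ c_1 = d \circ c_1$; the substance is the same.
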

\begin{proof}
First, let us prove that the outer rectangle in the diagram above is a pullback.
Let $c_1 : X \to \U'$ and $c_2 : X \to \Eq(\U')$ be maps such that the obvious square commutes (that is, such that $q_0' \circ c_2 = q_1' \circ c_2 = q_2' \circ c_2 = i \circ q \circ q_0' \circ c_2 = c_1$).
Since $i$ is a monomorphism, we just need to find a map $t : X \to \U$ such that $i \circ t = c_1$ and $i' \circ i \circ t = c_2$.
Let $t = q \circ q_0' \circ c_2$.
We have $i \circ t = i \circ q \circ q_0' \circ c_2 = c_1$ by assumption.
Since $\langle q_0', q_1', q_2' \rangle$ is a monomorphism, to show that $i' \circ i \circ q \circ q_0' \circ c_2 = c_2$, it is enough to show that these maps become equal after we compose them with $q_0'$, $q_1'$, and $q_2'$.
Note that we have $q_j' \circ i' \circ i = i$ for every $j \in \{ 0, 1, 2 \}$.
Thus, $q_j' \circ i' \circ i \circ q \circ q_0' \circ c_2 = i \circ q \circ q_0' \circ c_2 = c_1 = q_j' \circ c_2$.

Now, if $X$ is the intersection of $h$ and $d$, then the above properties hold for it.
In particular, $i' \circ c_1 = i' \circ i \circ q \circ q_0' \circ c_2 = c_2$.
\end{proof}

Now, we can describe the interpretation of $\iso$ in $\mathcal{M}$ with $\pi' = q_0^*(\pi) : \hU' \to \U'$ as the universe (where $\U' = \Eq(\U)$ and $q_0 : \U' \to \U$ is defined as before).
The interpretation of the first six judgements in the premise of $\iso$ can be encoded as a map $\Gamma \to \qEq(\pi')$, where $\qEq(\pi')$ is the type of quasi-equivalences.
To define the interpretation of $\iso$, it is enough to define a map $h' : \qEq(\pi') \to \U'^\I$.
The first two computational rule hold if $h'$ is a map over $\U' \times \U'$.
Since $\qEq(\pi')$ is equivalent to $\Eq(\U')$ over $\U' \times \U'$ (and actually over the object of maps), it is enough to define a map $h : \Eq(\U') \to \U'^\I$ and we already did that.

To make the third computational rule hold, we need to use the construction from section~\ref{sec:interval}.
Let $C = \Eq(\U') \times_{\U'} \times \hU'$.
Let $f : C \to T$ be the following map:
\[ h \times_{\U'} \id : \Eq(\U') \times_{\U'} \times \hU' \to \U'^\I \times_{\U'} \times \hU'. \]
Now, consider the following diagram:
\[ \xymatrix{ \hU' \times \I \amalg_{\hU' \times 2} C \times 2 \ar[rr]^-{[\pi_1,[\pi_2,\ev]]} \ar[d]_{\langle i' \circ \pi', \id \rangle \square [\lcon,\rcon]} & & \hU' \ar[d]^{\pi'} \\
              C \times \I \ar[rr] \ar@{-->}[urr]^g                                                                                                              & & \U'
            } \]
where the bottom map is the map that appears in the diagram from section~\ref{sec:interval}.
Since $\langle i' \circ \pi', \id \rangle : \hU' \to C$ is a trivial cofibration, we have a lift $g$ in the diagram above.
The required properties for $f$ and $g$ follows from commutativity of this diagram.
The map $\ev : C \to \hU'$ evaluates the equivalence on the given value.
This is precisely the right hand side of the last computational rule for $\iso$, which implies that it holds for this interpretation.

Finally, we can show that this interpretation is consistent with the interpretation from the previous section.
To do this, we need to consider the intersection $Y$ of $C$ and $\hU'$ in $T$.
It can be described as the following pullback:
\[ \xymatrix{ Y \ar[r]^{d_1} \ar[d]_{d_2} \pb   & \hU' \ar[d]^{\pi'} \\
              X \ar[r]^{c_1} \ar[d]_{c_2} \pb   & \U' \ar[d]^{\ev \circ \lcon} \\
              \Eq(\U') \ar[r]_-h                & \U'^\I
            } \]
The inclusion $Y \to C$ is defined as $\langle c_2 \circ d_2, d_1 \rangle$.
Now, we need to show that the following square commutes:
\[ \xymatrix{ Y \times \I \ar[rr]^{d_1 \times \id} \ar[d]_-{\langle c_2 \circ d_2, d_1 \rangle \times \id}  & & \hU' \times \I \ar[d]^{\pi_1} \\
              C \times \I \ar[rr]_-g                                                                        & & \hU'
            } \]
By \rlem{intersection}, we have $c_2 \circ d_2 = i' \circ c_1 \circ d_2 = i' \circ \pi' \circ d_1$.
Thus, $\langle c_2 \circ d_2, d_1 \rangle = \langle i' \circ \pi', \id \rangle \circ d_1$.
Now, the required property follows from the definition of $g$ since we have $g \circ (\langle i' \circ \pi', \id \rangle \times \id) = \pi_1$.

\bibliographystyle{amsplain}
\bibliography{ref}

\providecommand{\bysame}{\leavevmode\hbox to3em{\hrulefill}\thinspace}
\providecommand{\MR}{\relax\ifhmode\unskip\space\fi MR }
% \MRhref is called by the amsart/book/proc definition of \MR.
\providecommand{\MRhref}[2]{%
  \href{http://www.ams.org/mathscinet-getitem?mr=#1}{#2}
}
\providecommand{\href}[2]{#2}
\begin{thebibliography}{1}

\bibitem{cubical-tt}
Cyril {Cohen}, Thierry {Coquand}, Simon {Huber}, and Anders {M{\"o}rtberg},
  \emph{{Cubical Type Theory: a constructive interpretation of the univalence
  axiom}},  (2016), \href {http://arxiv.org/abs/1611.02108}
  {\path{arXiv:1611.02108}}.

\bibitem{alg-models}
V.~{Isaev}, \emph{Model structures on categories of models of type theories},
  Mathematical Structures in Computer Science (2017), 1--28.

\bibitem{kap-lum-voe}
C.~{Kapulkin} and P.~{LeFanu Lumsdaine}, \emph{{The Simplicial Model of
  Univalent Foundations (after Voevodsky)}},  (2012), \href
  {http://arxiv.org/abs/1211.2851} {\path{arXiv:1211.2851}}.

\end{thebibliography}

\end{document}